\setlist[enumerate]{leftmargin=.5in}
\setlist[itemize]{leftmargin=.5in}
\crefname{hypothesis}{Hypothesis}{Hypotheses}
\title{Almost-lossless compression of a low-rank random tensor}
\author{Minh Thanh Vu\thanks{Huawei Research Center, Sweden.}}
\begin{document}

\maketitle

% REQUIRED
\begin{abstract}
In this work, we establish an asymptotic limit of almost-lossless compression of a random, finite alphabet tensor which admits a low-rank canonical polyadic decomposition.
\end{abstract}

% REQUIRED
\begin{keywords}
Information theory, Tensor compression, Tensor decomposition, Information spectrum method, Kruskal's uniqueness condition.
\end{keywords}

% REQUIRED
\begin{MSCcodes}
68P30, 15A69
\end{MSCcodes}

\section{Introduction}
Tensors, or multiway arrays, have been emerging as powerful tools with applications in signal processing, computer vision, and big-data analytics \cite{cichocki2015tensor,kolda2009tensor}. In these applications tensors of data are approximated using products of lower dimensional structures like factor matrices, core tensors, etc. Different models have been proposed for the approximation purpose such as canonical polyadic decomposition (CPD) \cite{carroll1970analysis,harshman1970foundations}, Tucker's decomposition \cite{tucker1963}, etc. In addition to being an effective analytic tool, tensors also arise naturally from data sources such as color images, videos, etc.

An interesting research question would be how to compress and store multiway arrays of data efficiently. In this paper we put forward this research question by studying the asymptotic limit of tensor compression. For a tractable analysis, we assume in our study that the tensor can be factorized exactly as a sum of multiple rank-one components or a CPD model. This simple assumption is justifiable in practice since many tensors can be well-approximated by a few components. %In otherwords we assume that the noisy remaining part in the decomposition procedure contains insignificant information.
We then assume a probabilistic model for each factor matrix in which elements are drawn from a distribution on a finite alphabet.

In information theory, data are usually modeled in forms of vectors or sequences. Our model hence can be viewed as a generalization of previous data compression models. An achievable tensor compression-reconstruction scheme can be designed based on the conventional typical arguments. Due to the inherent dependence among elements inside a tensor, we need to use novel arguments for the converse proof. Since each factor matrix is randomly generated from a finite alphabet, it is full rank with high probability. We then use the connection between full rank factor matrices and the essential uniqueness of CP decomposition to establish the converse proof.

Our paper is organized as follows. In Section \ref{sec_2} we present assumptions used in our study and recap some definitions from multilinear algebra and information theory. We establish the asymptotic limit of compressing a random rank-one tensor in Section \ref{sec_3}. Then we provide two examples which highlight challenges of multiple-component scenarios as well as propose a way to tackle these problems in Section \ref{sec_4}. Finally, based on the observations in Section \ref{sec_4} we establish the asymptotic limit of almost-lossless compression of a random low-rank tensor in Section \ref{sec_5}.

\section{Preliminaries}\label{sec_2}
A tensor $\mathbf{T} \in \mathbb{R}^{I_1\times I_2\times\dots\times I_N}$ is a multiway array indexed by a tuple $(i_1,\dots,i_N)$ where $i_j\in [1:I_j]$. The parameter $N$ is called the order of the tensor. We say that $\mathbf{T}$ is a single component tensor if there exist vectors $\mathbf{a}_i\in\mathbb{R}^{I_i\times 1}$, $i\in [1:N]$, such that
\begin{displaymath}
{T}_{i_1,\dots,i_N} = a_{1i_1}\cdots a_{Ni_N}.
\end{displaymath}
In other words, $\mathbf{T}$ is the outer product of $\mathbf{a}_i$ and we write $\mathbf{T} = \mathbf{a}_1\circ\dots\circ \mathbf{a}_N$. For such a tensor we say that its rank is one if it is unequal to the all 0 tensor. Assume that $\mathbf{T}$ can be factorized into $R$ single component tensors $\mathbf{T}_i$, i.e.,
\begin{equation}
\mathbf{T} = \sum_{i=1}^R\mathbf{T}_i = \sum_{i=1}^R \mathbf{a}_{i1}\circ\dots\circ \mathbf{a}_{iN}. \label{factor_decomposi}
\end{equation}
Note that the arithmetic in our study is carried out on $\mathbb{R}$. We also write $\mathbf{T} = [\mathbf{X}_1;\cdots;\mathbf{X}_N]$ where each factor matrix is given by $\mathbf{X}_i = [\mathbf{a}_{1i},\dots,\mathbf{a}_{Ri}]$. The minimum of such $R$ in such factorizations of $\mathbf{T}$ is called the rank of $\mathbf{T}$. Determining the rank of a tensor is challenging, in fact NP-hard when $N\geq 3$  \cite{hillar2013most}. In this paper we assume that \textit{each realization of our random tensor $\mathbf{T}$ can be decompositions as in \eqref{factor_decomposi} with $R$ single components}, i.e., it does not necessarily mean that $R$ is the rank of the corresponding tensor. In our study we further assume that the parameter $R$ fixed.\\
To put our study into the information theoretic framework a probabilistic postulate of our data needs to be made. For this purpose we assume that factor matrices are independent unless otherwise stated. In each factor matrix $\mathbf{X}_i$, $i\in [1:N]$, the $(j,r)$-th entry $X_{i,jr}$ is drawn from a distribution $P_{\mathcal{X}_i,r}$ on an alphabet $\mathcal{X}_i$, $X_{i,jr}\sim P_{\mathcal{X}_i,r}$, for all $j\in [1:I_i]$ and $r\in [1:R]$. We further assume that the alphabets $\mathcal{X}_i$ are finite  and $\max_{i,r}\max_{x\in\mathcal{X}_i}P_{\mathcal{X}_i,r}(x)<1$. With this our main focus would be the set of tensors taking values in the alphabet $\mathcal{T}\triangleq\bigtimes_{i}\mathcal{X}_i$. For simplicity we only consider the case that all sizes are equal to each other
\begin{displaymath}
I_1=\dots = I_N = n,\;n\to\infty.
\end{displaymath}
 Now we are ready to define information theoretic quantities of interest.
\begin{definition}
For a given $n$, a tensor compression-reconstruction scheme consists of two mappings:
\begin{itemize}
\item a compression mapping $\phi_n\colon\mathcal{T}\to \mathcal{M}$ which maps a tensor $\mathbf{T}$ to an index $m\in \mathcal{M}$, namely $\phi_n(\mathbf{T}) = m$, which is stored in a storage medium such as a hard disk,
\item and a reconstruction mapping $\psi_n\colon \mathcal{M }\to\mathcal{T}$ which outputs a tensor $\hat{\mathbf{T}}$ in the alphabet $\mathcal{T}$ from the compressed index $m$, namely $\hat{\mathbf{T}} = \psi_n(m)$.
\end{itemize}
\end{definition}
\begin{definition}
A compression threshold $C$ is almost-losslessly achievable if there exists a sequence of tensor compression-reconstruction schemes $(\phi_n,\psi_n)$ satisfying
\begin{displaymath}
\limsup_{n\to\infty} \frac{1}{n}\log|\mathcal{M}|\leq C, \;\lim_{n\to\infty}\mathrm{Pr}\{\hat{\mathbf{T}}\neq \mathbf{T}\}\to 0,
\end{displaymath}
where $|\mathcal{M}|$ is the cardinality of $\mathcal{M}$. We define $C^{\star}_t$ to be the infimum of all almost-losslessly achievable thresholds $C$.
\end{definition}
Conventionally we would be interested in the \textit{compression rate} \cite{cover1999elements} defined as $\frac{1}{\text{input size}}\log|\mathcal{M}|$. In our setting the input size is given by $n^N$ and the compression alphabet size is upper bounded by $(\prod_i|\mathcal{X}_i|)^{nR}$. Therefore the compression rate is zero when $N\geq 2$. However, in practice we are interested in the amount of information that we need to store rather than the rate alone. Therefore the quantity compression threshold is appropriate in this case. Finally to characterize the minimum compression threshold we need to use the entropy. For a distribution $P$ on a finite alphabet $\mathcal{X}$ the entropy $H(P)$ is defined as $H(P) = \sum_{x\in\mathcal{X}} -P(x)\log P(x)$.

\begin{remark}

In practice, performing exact CP decomposition of $\mathbf{T}$, is a challenging problem due to non-linearity. Therefore it is difficult to choose the exact alphabets $\mathcal{X}_i$ for modeling.  For some data sources such as videos or images, as pixels take values between 0 and 255, we can assume that entries of the tensors $\mathbf{T}$ take values on a finite alphabet. Hence the distribution of $\mathbf{T}$ is a discrete one. It is then natural to model $\mathcal{X}_i$ to be discrete albeit not easy to select.

If we assume that $\mathbf{T}$ comes from a continuous distribution then it is also often assumed that entries of factors take values on $\mathbb{R}$. Hence our assumption that $\mathcal{X}_i$ is finite, can be seen as a quantization argument. This naturally introduces distortion into our formulation. Certainly, we want to obtain a compression-distortion trade off. Our study however indicates that obtaining this goal might be formidable.

Despite the above shortcomings our model is practically useful in the following sense. Suppose that a universal (almost lossless) compression algorithm is designed for tensors. Our model can be used as an additional performance benchmark. For example tensors of size $n^N$ can be artificially generated according to our model and provided as the input for the algorithm. If the compression threshold obtained by the algorithm is far from the minimum threshold then there is still room for improvement. It should be noted that algorithms should not be designed specifically for our model due to its simplified assumptions.

\end{remark}

\begin{remark}
Although it is not the focus of our study, let us consider the case that for all $i\in [1:N]$ except one $I_i$ are constant while the last dimension grows. This models the case where one single dimension is very large while the others are very small, e.g., in a long video recording session. Tensor decomposition reduces the number of parameters that need to be stored from $\prod_{i=1}^NI_i$ to $\sum_i I_i$, which is not very satisfying. One should not stop there and instead look for new methods that compress the tensor further.  We can adapt our theory to this problem but it would be more complicated. Even when $I_i$ are large for all $i\in [1:N]$, it is important to compress the tensor further since the data might be distributed to a large number of users.
\end{remark}

\section{Rank-one compression}\label{sec_3}
We begin our study by analyzing the simplest case of rank-one tensor compression. In this we explain our main information theoretic idea for unfamiliar audiences. This section hence serves as a warming up to more complex scenarios in latter sections.\\
Recall that our random tensor $\mathbf{T}$ can be written as
\begin{equation}
T_{i_1,\dots,i_N} = a_{1i_1}\cdots a_{Ni_N}.\label{rank_one_assump}
\end{equation}
The following result characterizes the fundamental compression threshold for the single-component, i.e., rank-one, tensor scenario.
\begin{theorem}\label{thm_1}
When $\mathbf{T}$ is a single-component tensor, then the minimum almost-lossless compression threshold is given by
\begin{displaymath}
C^{\star}_t = \sum_{i=1}^N H(P_{\mathcal{X}_i}).
\end{displaymath}
\end{theorem}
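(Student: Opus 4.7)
My plan is to handle the achievability and converse separately, with both arguments centered on the observation that on the high-probability event $\{\mathbf{T} \neq \mathbf{0}\}$ the rank-one factorization of $\mathbf{T}$ is essentially unique. Intuitively this means compressing $\mathbf{T}$ is equivalent, up to lower-order terms, to jointly compressing the $N$ independent factor vectors $\mathbf{a}_1, \ldots, \mathbf{a}_N$, whose joint entropy is exactly $n \sum_i H(P_{\mathcal{X}_i})$.

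For achievability I would apply standard typical-set encoding to the factors. Let $A_i \subseteq \mathcal{X}_i^n$ be the $\epsilon$-typical set for $P_{\mathcal{X}_i}^n$, so that $|A_i| \le 2^{n(H(P_{\mathcal{X}_i}) + \epsilon)}$ and $\Pr\{\mathbf{a}_i \notin A_i\} \to 0$. Define $B = \{\mathbf{b}_1 \circ \cdots \circ \mathbf{b}_N : \mathbf{b}_i \in A_i\}$, which has cardinality at most $\prod_i |A_i|$. The encoder assigns distinct indices to tensors in $B$ and a single error flag to anything outside; by the union bound the error probability tends to $0$, while $\frac{1}{n}\log|\mathcal{M}| \le \sum_i H(P_{\mathcal{X}_i}) + N\epsilon + o(1)$. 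Letting $\epsilon \to 0$ yields $C^{\star}_t \le \sum_i H(P_{\mathcal{X}_i})$.

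For the converse, the central step is a uniqueness lemma for rank-one decompositions over discrete alphabets: on $\{\mathbf{T} \neq \mathbf{0}\}$, any two factorizations differ by rescalings $\mathbf{a}_i \to \alpha_i \mathbf{a}_i$ subject to $\prod_i \alpha_i = 1$, and any nonzero entry $a_{i,j}$ of $\mathbf{a}_i$ constrains $\alpha_i$ to lie in $\{x / a_{i,j} : x \in \mathcal{X}_i\}$, a set of cardinality at most $|\mathcal{X}_i|$. Because $\Pr\{\mathbf{T} = \mathbf{0}\} \to 0$ under the hypothesis $\max_{i,x} P_{\mathcal{X}_i}(x) < 1$, this gives $H(\mathbf{a}_1, \ldots, \mathbf{a}_N \mid \mathbf{T}) = O(1)$, which combined with the chain rule and factor independence yields $H(\mathbf{T}) \ge n \sum_i H(P_{\mathcal{X}_i}) - O(1)$. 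Fixing any deterministic canonicalization map $f : \mathbf{T} \mapsto (\tilde{\mathbf{a}}_1, \ldots, \tilde{\mathbf{a}}_N) \in \prod_i \mathcal{X}_i^n$ (e.g., anchored at the lexicographically smallest nonzero entry), I would then apply Fano's inequality to $f(\mathbf{T})$ rather than to $\mathbf{T}$ itself: the range of $f$ has log-cardinality $n \sum_i \log|\mathcal{X}_i| = O(n)$, whereas the alphabet of $\mathbf{T}$ has log-cardinality $\Theta(n^N)$, so Fano produces the useful bound $H(\mathbf{T}) = H(f(\mathbf{T})) \le \log|\mathcal{M}| + 1 + \epsilon_n \, n \sum_i \log|\mathcal{X}_i|$, where $\epsilon_n = \Pr\{\hat{\mathbf{T}} \neq \mathbf{T}\} \to 0$. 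Dividing by $n$ and taking $\liminf$ gives the matching lower bound $C^{\star}_t \ge \sum_i H(P_{\mathcal{X}_i})$.

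The step I expect to be most delicate is the uniqueness lemma, since the whole converse hinges on controlling the scaling ambiguity at only $O(1)$ conditional entropy. This uses both the finiteness of each $\mathcal{X}_i$ (to cap the number of admissible rescalings) and the strict inequality $\max_x P_{\mathcal{X}_i}(x) < 1$ (to ensure $\mathbf{T}$ is non-degenerate with overwhelming probability). Equally important is the choice to route Fano through the canonical factor tuple rather than $\mathbf{T}$, which sidesteps the otherwise fatal $\log|\mathcal{T}| = \Theta(n^N)$ term. Both ideas should reappear in a magnified form in the paper's rank-$R$ converse, where Kruskal's essential-uniqueness condition will play the role of the elementary scaling argument used here.
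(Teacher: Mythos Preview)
Your achievability is the same as the paper's. Your converse is correct but takes a genuinely different route. The paper explicitly dismisses Fano's inequality (``standard arguments using Fano's inequality \ldots\ are no longer applicable'') and instead runs an information-spectrum argument: it defines the decodable set $\mathcal{D}_n$, its preimage $\mathcal{S}_n$ in factor space, observes that for a nonzero rank-one tensor the factor tuple is determined by the first entries of all but one factor (hence $|\mathcal{S}_n|\le \min_i\prod_{j\neq i}|\mathcal{X}_j|\,|\mathcal{M}|$), and then bounds $\Pr\{-\sum_i\log P_{\mathcal{X}_i}^n(\mathbf{a}_i)\ge\log|\mathcal{M}|+\eta\}$ directly. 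You instead resurrect Fano by routing it through the deterministic canonicalization $f(\mathbf{T})$, whose range has log-cardinality $O(n)$ rather than $\Theta(n^N)$; combined with $H(\mathbf{a}_1,\dots,\mathbf{a}_N\mid\mathbf{T})=O(1)$ this yields the same conclusion. Both approaches rest on the same combinatorial fact (finite scaling ambiguity), but exploit it differently: the paper turns it into a cardinality bound on $\mathcal{S}_n$, you turn it into a conditional-entropy bound. Your argument is conceptually cleaner and stays within standard entropy/Fano machinery; the paper's information-spectrum route is more hands-on but has the advantage that it generalizes verbatim to the rank-$R$ case in Section~5 (contrary to your closing prediction, the paper does \emph{not} use Fano there either---it repeats the spectrum argument with the full-rank event $\mathcal{E}_n$ replacing the nonzero event).
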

\begin{proof}
First we show that for any $\eta>0$, $\sum_{i=1}^N H(P_{\mathcal{X}_i})+\eta$ is an achievable compression threshold, i.e., $C^{\star}_t \leq  \sum_{i=1}^N H(P_{\mathcal{X}_i})$ holds.\\
In information theory to show that a sequence of mappings $(\phi_n,\psi_n)$ exists, one usually uses a typicality argument. Namely we only need to consider a small set of tensors $\mathbf{T}$ which takes most of the probability. Our setting is a non-iid setting, as there are correlations between elements of $\mathbf{T}$. We therefore need some more work than conventional approaches in \cite{hanspectrum,cover1999elements}. Given a $\gamma>0$, for each $i\in [1:N]$ we define a typical set 
\begin{displaymath}
\mathcal{A}_{i,\gamma}^n = \{\mathbf{a}_i\mid |-\log P_{\mathcal{X}_i}^n(\mathbf{a}_i) - nH(P_{\mathcal{X}_i})|<n\gamma\}.
\end{displaymath}
For each $i \in [1:N]$, to store an element inside $\mathcal{A}_{i,\gamma}^n$ one needs at most $n(H(P_{\mathcal{X}_i}) + \gamma)$ nats. By \cite[Theorem 3.1.2]{cover1999elements} we also have 
\begin{displaymath}
P_{\mathcal{X}_i}^n(\mathcal{A}_{i,\gamma}^n) \geq (1-\gamma),\;\text{for all sufficiently large}\; n.
\end{displaymath}
We define the Cartesian product set $\mathcal{S}_{\gamma,n} = \bigtimes_{i=1}^N \mathcal{A}_{i,\gamma}^n$. Then the typical set used for compression in our setting is defined as
\begin{displaymath}
\mathcal{T}_{\gamma,n} = \{\mathbf{a}_1\circ\cdots\circ\mathbf{a}_N\mid (\mathbf{a}_1,\dots,\mathbf{a}_N)\in \mathcal{S}_{\gamma,n}\}.
\end{displaymath}
 We observe  
\begin{align}
\mathrm{Pr}\{\mathbf{T}\in \mathcal{T}_{\gamma,n}\} &\geq \sum_{(\mathbf{a}_1,\dots,\mathbf{a}_N)\in \mathcal{S}_{\gamma,n}} \prod_{i=1}^N P_{\mathcal{X}_i}^n(\mathbf{a}_i) = \prod_{i=1}^N P_{\mathcal{X}_i}^n(\mathcal{A}_{i,\gamma}^n)\nonumber\\
&\geq (1-\gamma)^N.
\end{align}
The first inequality follows since there might exist tuples $(\mathbf{a}_1,\dots,\mathbf{a}_N)\in \mathcal{S}_{\gamma,n}$ and $(\mathbf{a}_1^{\prime},\dots,\mathbf{a}_N^{\prime})\notin \mathcal{S}_{\gamma,n}$ such that 
\begin{displaymath}
\mathbf{a}_1\circ\cdots\circ\mathbf{a}_N = \mathbf{a}_1^{\prime}\circ\cdots\circ\mathbf{a}_N^{\prime}.
\end{displaymath} 
Our compression scheme is given as follows. $\phi_n$ maps each tensor in $\mathcal{T}_{\gamma,n}$ to a unique index. The reconstruction mapping $\psi_n$ maps the stored index to the corresponding tensor. For tensors $\mathbf{T}$ that do not belong to $\mathcal{T}_{\gamma,n}$, $\phi_n$ maps these to a fixed index and $\psi_n$ outputs a fixed tensor. This requires at most $n(\sum_{i=1}^N H(P_{\mathcal{X}_i})+N\gamma)$ nats.
Therefore for any given $\eta>0$, there exists a $n_0(\eta)$ such that for all sufficiently large $n\geq n_0(\eta)$ our tensor compression-reconstruction scheme satisfies
\begin{align}
\mathrm{Pr}\{\hat{\mathbf{T}}\neq \mathbf{T}\}\leq \eta, \quad \frac{1}{n}\log |\mathcal{M}|\leq \sum_{i=1}^N H(P_{\mathcal{X}_i})+ \eta.
\end{align}
In conclusion we have $C^{\star}_t \leq  \sum_{i=1}^N H(P_{\mathcal{X}_i})$.\\
Now we show the reverse direction, i.e., $C^{\star}_t \geq  \sum_{i=1}^N H(P_{\mathcal{X}_i})$. For a given achievable compression threshold $C$ let $\{(\phi_n,\psi_n)\}_{n=1}^{\infty}$ be a given sequence of tensor compression-reconstruction schemes satisfying
\begin{displaymath}
\limsup_{n\to\infty}\frac{1}{n}\log|\mathcal{M}|\leq C,\; \lim_{n\to\infty}\mathrm{Pr}\{\psi_n(\phi_n(\mathbf{T}))\neq \mathbf{T}\}\to 0.
\end{displaymath}
Then for a given $\epsilon>0$ there exists a $n_{\epsilon}$ such that for all $n\geq n_{\epsilon}$ we have
\begin{displaymath}
\log|\mathcal{M}|\leq n(C+\epsilon),\;\text{and}\;\mathrm{Pr}\{\psi_n(\phi_n(\mathbf{T}))\neq \mathbf{T}\} \leq \epsilon.
\end{displaymath}
For a given $n$ we define the following correct decodable set of tensors
\begin{displaymath}
\mathcal{D}_n = \{ \mathbf{T}\mid \psi_n(\phi_n(\mathbf{T})) = \mathbf{T}\}.
\end{displaymath}
We define the expanding set of factors
\begin{displaymath}
\mathcal{S}_n = \{(\mathbf{a}_1,\dots,\mathbf{a}_N)\mid \mathbf{a}_1\circ\cdots\circ\mathbf{a}_N\in \mathcal{D}_n\}.
\end{displaymath}
We have
\begin{align}
 \mathrm{Pr}\{(\mathbf{X}_1,\dots,\mathbf{X}_N)\in \mathcal{S}_n\} = \mathrm{Pr}\{\mathbf{T}\in\mathcal{D}_n\}. \label{condensed_prob}
 \end{align}
  We observe that for a given tensor $\mathbf{T}$, once $a_{i1}$ are given for all but one $i\in [1:N]$ we can deduce the remaining elements in vectors $\{\mathbf{a}_j\}_{j=1}^N$. Therefore we have $|\mathcal{S}_n|\leq \min_i \prod_{j\neq i}|\mathcal{X}_j||\mathcal{D}_n|$. Additionally since $\phi_n$ can only take $|\mathcal{M}|$ values so does the composite mapping $\psi_n(\phi_n(\cdot))$, which implies that $|\mathcal{D}_n|\leq |\mathcal{M}|$. In conclusion we have $|\mathcal{S}_n|\leq \min_i \prod_{j\neq i}|\mathcal{X}_j| |\mathcal{M}|$. \\
Since the distribution of $\mathbf{T}$ is no longer a product of identical components in our case, standard arguments using Fano's inequality as in \cite{cover1999elements} are no longer applicable. In order to show the converse we define the following atypical set, parameterized by $n$ and $\eta$,
\begin{align}
\mathcal{T}_{n,\eta} = \{(\mathbf{a}_1,\dots,\mathbf{a}_N) \mid -\sum_{i=1}^N\log P_{\mathcal{X}_i}^n(\mathbf{a}_i)\geq \log|\mathcal{M}|+ \eta\}.\label{typical_converse}
\end{align}
For all sufficiently large $n\geq n_{\epsilon}$, using the information-spectrum arguments \cite{hanspectrum} we then have
\begin{align}
\mathrm{Pr}\{(\mathbf{X}_1,\dots,\mathbf{X}_N) \in \mathcal{T}_{n,\eta}\} &= \mathrm{Pr}\{(\mathbf{X}_1,\dots,\mathbf{X}_N) \in \mathcal{S}_n\cap \mathcal{T}_{n,\eta}\} + \mathrm{Pr}\{(\mathbf{X}_1,\dots,\mathbf{X}_N) \in \mathcal{S}_n^c\cap \mathcal{T}_{n,\eta}\}\nonumber\\
&\leq \mathrm{Pr}\{(\mathbf{X}_1,\dots,\mathbf{X}_N) \in \mathcal{S}_n^c\} + \mathrm{Pr}\{(\mathbf{X}_1,\dots,\mathbf{X}_N) \in \mathcal{S}_n\cap \mathcal{T}_{n,\eta}\}\nonumber\\
& \stackrel{\eqref{condensed_prob}}{=} \mathrm{Pr}\{\psi_n(\phi_n(\mathbf{T}))\neq \mathbf{T}\} + \sum_{(\mathbf{a}_1,\dots,\mathbf{a}_N)\in \mathcal{S}_n\cap \mathcal{T}_{n,\eta}}\prod_{i=1}^N P_{\mathcal{X}_i}^n(\mathbf{a}_i)\nonumber\\
&\stackrel{\eqref{typical_converse}}{\leq} \epsilon + |\mathcal{S}_n\cap \mathcal{T}_{n,\eta}|e^{-\eta}/|\mathcal{M}|\leq \epsilon + |\mathcal{S}_n|e^{-\eta}/|\mathcal{M}|\nonumber\\
&\leq \epsilon + e^{-\eta}\min_i\prod_{j\neq i}|\mathcal{X}_j|.
\end{align}
We take $\eta = n\gamma$. If for all $n_0\geq n_{\epsilon}$ there exists a $n>n_0$ such that $\log |\mathcal{M}| + \eta< n(\sum_i H(P_{\mathcal{X}_i})- N\gamma)$, then we have 
\begin{displaymath}
\limsup_{n\to\infty} \mathrm{Pr}\{(\mathbf{X}_1,\dots,\mathbf{X}_N) \in \mathcal{T}_{n,\eta}\} = 1,
\end{displaymath}
due to the weak law of large numbers,
which violates that last inequality. We must have $\log |\mathcal{M}| + n\gamma\geq n (\sum_{i}H(P_{\mathcal{X}_i}) - N\gamma)$, which in turn implies that $n(C+\epsilon+\gamma)\geq n (\sum_{i}H(P_{\mathcal{X}_i}) - N\gamma)$, for all $n\geq n_{\epsilon}$. Since $\gamma$ and $\epsilon$ are arbitrary we have $C^{\star}_t\geq \sum_{i}H(P_{\mathcal{X}_i})$.
\end{proof}
\begin{remark}
In the converse direction, we have carefully controlled the contribution of atypical tuples of factors, those who are in $\mathcal{T}_{n,\eta}$. In the single-component scenario, as we have seen, for each tensor the number of such tuples of factors is always bounded by a constant. This no longer holds in the multi-component scenarios.
\end{remark}

\begin{remark}
Due to our assumption on finiteness of $\{\mathcal{X}_i\}_{i=1}^N$ and the relation \eqref{rank_one_assump}, performing tensor decomposition of the rank-one model is relatively easy. The output of a decomposition rule $\{\hat{\mathbf{a}}_i\}_{i=1}^N$ is however not necessarily the same as the underlying $\{\mathbf{a}_i\}_{i=1}^N$. Straightforward usage of compression schemes designed for $P_{\mathcal{X}_i}$ on $\hat{\mathbf{a}}_i$ is not recommended due to a distribution mismatch. In other words, a two-step algorithm involving a tensor decomposition in the first step and compression of factors in the second step might not be optimal. Our scheme indicates that to ensure optimality at least some additional constraints, such as typicality in our model, need to be imposed on top of the tensor decomposition.
\end{remark}

\section{Examples of 2-component compression}\label{sec_4}
In this section we present some examples involving random tensors admitting two-component decompositions.
In these examples we show that there exist tensors for which the number of tuples of factors increases exponentially with $n$. This implies that a straightforward application of previous arguments in Theorem \ref{thm_1} is no longer possible. However the probability of the set of these tensors are negligible. Hence the converse arguments can be fixed.

\textit{Example 1}: In the first example we consider a supersymmetric scenario where the order of the tensor is 3, $N=3$, and
\begin{displaymath}
\mathbf{X}_1 = \mathbf{X}_2 = \mathbf{X}_3 = [\mathbf{a}_1,\mathbf{a}_2] \in \mathcal{X}^{n\times 2}.
\end{displaymath}
We further assume that the underlying alphabet $\mathcal{X}$ is $\mathcal{X} = \{-1,1\}$. Assume that $a_{1i}\sim P$ and $a_{2i}\sim Q$ where $P$ and $Q$ are two distributions on $\mathcal{X}$ satisfying $P(a)\neq 0$ and $Q(a)\neq 0$ for all $a\in \mathcal{X}$. To design a compression mapping one only needs to look at the set
\begin{displaymath}
\mathcal{A}_{\gamma}^n = \{\mathbf{X}\mid |-\log \big[P^n(\mathbf{a}_1)Q^n(\mathbf{a}_2)\big] - n[H(P)+H(Q)]|<n\gamma\}.
\end{displaymath}
We similarly form a set
\begin{displaymath}
\mathcal{T}_{n,\gamma} = \{[\mathbf{X};\mathbf{X};\mathbf{X}]\mid\mathbf{X}\in\mathcal{A}_{\gamma}^n\},
\end{displaymath}
and index all of its elements. If $\mathbf{T}$ belongs to $\mathcal{T}_{n,\gamma}$ we store the corresponding index. Otherwise we store a special index for all tensors that are not in $\mathcal{T}_{n,\gamma}$. This requires $n[H(P)+H(Q)+\gamma]$ nats.\\
 In the converse direction given a realization $\mathbf{T}$ we bound the number of pairs of factors $(\mathbf{a}_1,\mathbf{a}_2)$ resulting in $\mathbf{T}$. Each element of the tensor $\mathbf{T}$ is given by
\begin{displaymath}
T_{i_1i_2i_3} = a_{1i_1}a_{1i_2}a_{1i_3} + a_{2i_1}a_{2i_2}a_{3i_3}.
\end{displaymath}
Since the alphabet is $\{-1,1\}$ we have $a^3 = a$ for any $a\in\mathcal{X}$. To recover $(\mathbf{a}_1,\mathbf{a}_2)$ from $\mathbf{T}$ we therefore only need to consider $n+\binom{n}{3}$ expressions of the forms $a_{1i} + a_{2i} = T_{iii} $ where $i \in [1:n]$, and $a_{1i_1}a_{1i_2}a_{1i_3} + a_{2i_1}a_{2i_2}a_{2i_3} = T_{i_1i_2i_3}$ where $(i_1,i_2,i_3)\in [1:n]^3$ are mutually different.
\begin{itemize}
\item Let us consider the first case when $a_{1i} + a_{2i} = 0$ for all $i\in [1:n]$. This implies that $a_{1i} = -a_{2i}$ for all $i\in [1:n]$. For any triple $(i_1,i_2,i_3)$ we then have $a_{1i_1}a_{1i_2}a_{1i_3} = - a_{2i_1}a_{2i_2}a_{2i_3}$ which leads to $a_{1i_1}a_{1i_2}a_{1i_3} + a_{2i_1}a_{2i_2}a_{2i_3} = 0$. Therefore when $\mathbf{T} = \mathbf{0}$, the number of decompositions is controlled by the system of equations $\{a_{1i} + a_{2i} =0\}_{i=1}^n$. The number of pairs of factors that result in this particular tensor is hence $2^{n}$. In more detail each solution factor matrix has the form
\begin{displaymath}
\mathbf{X} = [\mathbf{a},-\mathbf{a}], \; \mathbf{a}\in \mathcal{X}^n,
\end{displaymath}
i.e., a \textit{rank-deficient} matrix. Next, we will calculate the probability of the event $\mathbf{T} = \mathbf{0}$. Then we have
\begin{align*}
\mathrm{Pr}\{\mathbf{T}=\mathbf{0}\} &= \sum_{\mathbf{a}\in\mathcal{X}^n}P^n(\mathbf{a})Q^n(-\mathbf{a})= \big[\sum_{a\in \mathcal{X}}P(a)Q(-a)\big]^n.
\end{align*}
As 
\begin{displaymath}
\sum_{a\in \mathcal{X}}P(x)Q(-x) = P(1)Q(-1) + P(-1)Q(1)<\max\{Q(-1),Q(1)\}<1
\end{displaymath}
we obtain that $\mathrm{Pr}\{\mathbf{T}=\mathbf{0}\}\to 0$ as $n\to\infty$.
\item Consider a tensor $\mathbf{T}$ for which we have, without the loss of generality,
\begin{align*}
 a_{1n} + a_{2n} &= 2,\nonumber\\
 \text{and}\; a_{1i} + a_{2i} &= 0,\; \forall i\in [1:n-1].
 \end{align*}
Then for all $i\in [1:n-1]$  since $a_{1i} = - a_{2i}$ holds we have
\begin{displaymath}
T_{1in} = a_{11}a_{1i}a_{1n} + a_{21}a_{2i}a_{2n} = 2a_{11}a_{1i}.
\end{displaymath}
 This implies that for a given choice of $a_{11}$ and a given $\mathbf{T}$ we can infer the other values of $a_{1i}$ and $a_{2i}$ uniquely. The number of pairs of factors for a given $\mathbf{T}$ in this case is $2$. Having more constraints of the form $a_{1i} + a_{2i}\neq 0$ does not increase the number of decompositions for a similar reason.
\item When $a_{1i} + a_{2i}\neq 0$ for all $i\in [1:n]$ then there is a unique pair of factors $(\mathbf{a}_1,\mathbf{a}_2)$ resulting in $\mathbf{T}$.
\end{itemize}
By excluding the all 0 tensor, $\mathbf{T}= \mathbf{0}$, from the decodable set $\mathcal{D}_n$, since the event has a vanishing probability, we can apply a similar argument as in the converse proof of Theorem \ref{thm_1}, for example the atypical set $\mathcal{T}_{n,\eta}$ can be defined accordingly as
\begin{displaymath}
\mathcal{T}_{n,\eta} = \{(\mathbf{a}_1,\mathbf{a}_2) \mid -(\log P^n(\mathbf{a}_1)+\log Q^n(\mathbf{a}_2))\geq \log|\mathcal{M}|+ \eta\},
\end{displaymath}
to conclude that in this case the minimum compression threshold is given by $C^{\star}_t = H(P) + H(Q)$.

\textit{Example 2}: Let us consider the compression of an order-2 tensor admitting a two single component decomposition
\begin{displaymath}
\mathbf{T} = \mathbf{X}_1\mathbf{X}_2^T = \mathbf{x}\mathbf{y}^T + \mathbf{u}\mathbf{v}^T,
\end{displaymath}
where $(\cdot)^T$ is the transpose operation and for notation brevity we have abbreviated $\mathbf{X}_1 = [\mathbf{x},\mathbf{u}]$ as well as $\mathbf{X}_2 = [\mathbf{y},\mathbf{v}]$. We assume that all random variables take values in the set $\{-1,1\}$ as well as $\mathbf{x}\sim P_{X}^{n}$, $\mathbf{y}\sim P_{Y}^n$, $\mathbf{u}\sim P_U^n$ and $\mathbf{v}\sim P_V^n$. This implies that elements of $\mathbf{T}$ take values in the set $\{-2,0,2\}$. To derive the informtion-theoretic converse for compression of this model we are similar interested in the number of tuples of factors $(\mathbf{x},\mathbf{y},\mathbf{u},\mathbf{v})$ resulting in a given tensor $\mathbf{T}$. 

Let $\mathbf{t}\in \{-2,0,2\}^{n^2\times 1}$ be the vectorized version of the transpose of $\mathbf{T}$. Without the loss of generality we assume that $\mathbf{t}$ has the following form
\begin{displaymath}
\mathbf{t} = (\underbrace{0,\dots,0}_{k},\underbrace{2,\dots,2}_{l},\underbrace{-2,\dots,-2}_{n-k-l},T_{21},T_{22},\dots)^T.
\end{displaymath}
A complete analysis of this example consists of the following major cases:
\begin{itemize}
\item $k\in [1:n-1]$, i.e., the first row contains at least one 0 and one non-zero,
\item $k=0$, i.e., none of the elements in the first row is zero,
\item $k=n$, i.e., the first row is the zero vector.
\end{itemize}
Presenting the entire details is rather complex and unnecessary. We consider two representative scenarios occurring when $k=n$. Let $m$ the number of rows indexed by $\{j_1,\dots,j_m\}$ such that $T_{ji}\neq 0$ for all $i\in [1:n]$ and $j\in \{j_1,\dots,j_m\}$. 
\begin{itemize}
\item When $\mathbf{T} = \mathbf{0}$, for a given $x_1,u_1 \in \{-1,1\}$ we have
\begin{align}
x_1y_i + u_1v_i = 0&\implies v_i = -u_1x_1y_i,\quad \forall i\in [1:n],\nonumber\\
x_iy_1 + u_iv_1 = 0&\implies u_i = x_1u_1x_i,\quad \forall i\in [2:n].
\end{align}
The factor matrices have the following form $\mathbf{X}_1 = [\mathbf{x}, a\mathbf{x}]$ and $\mathbf{X}_2 = [\mathbf{y},-a\mathbf{y}]$ where $a\in \{-1,1\}$, $\mathbf{x}\in \{-1,1\}^n$, $\mathbf{y}\in \{-1,1\}^n$. They are \textit{rank-deficient}. The number of tuples of factors for the all 0 tensor is $2\times 2^{n}\times 2^{n} = 2^{2n+1}$. The probability of this event is given by
\begin{align*}
\mathrm{Pr}\{\mathbf{T} = \mathbf{0}\} &= \sum_{\mathbf{x},\mathbf{y},a}P_{X}^n(\mathbf{x})P_Y^n(\mathbf{y})P_{U}^{n}(a\mathbf{x})P_V^n(-a\mathbf{y})\nonumber\\
&= \sum_{a}\sum_{\mathbf{x}}P_X^n(\mathbf{x})P_{U}^n(a\mathbf{x})\sum_{\mathbf{y}}P_{Y}^n(\mathbf{y})P_V^n(-a\mathbf{y})\nonumber\\
&=\sum_{a}\big[\sum_x P_{X}(x)P_{U}(ax)\big]^n\big[\sum_y P_{Y}(y)P_V(-ay)\big]^n\nonumber\\
& = \big[\sum_xP_{X}(x)P_{U}(x)\big]^n\big[\sum_yP_{Y}(y)P_V(-y)\big]^n\nonumber\\
&\quad + \big[\sum_xP_{X}(x)P_{U}(-x)\big]^n\big[\sum_yP_{Y}(y)P_V(y)\big]^n.
\end{align*}
We have
\begin{displaymath}
\mathrm{Pr}\{\mathbf{T}= \mathbf{0}\}\to 0\; \text{as}\; n\to \infty.
\end{displaymath}
\item When $m\geq 1$ holds, then given $(y_1,v_1)$, $(y_2,\dots,y_n)$ and $(v_2,\dots,v_n)$ are determined through
\begin{displaymath}
x_{j_1} = \text{sign}(T_{j_1i})y_i,\; u_{j_1} = \text{sign}(T_{j_1i})v_i,\;\forall i\in [1:n].
\end{displaymath}
For $j\notin \{j_1,\dots,j_m\}$ we can select $x_j$ freely. Therefore for each of these tensors we can find $8\times 2^{n-1-m} = 2^{n-m+2}$ tuples of factors $(\mathbf{x},\mathbf{y}, \mathbf{u},\mathbf{v})$ resulting in it. Furthermore due to the structure we also have $(T_{j_i1},\dots,T_{j_in}) = \pm (T_{j_11},\dots,T_{j_1n})$ for all $i\in [2:m]$.
A representative tuple of factors has the following form
\begin{align*}
\mathbf{x} &= (x_1,\beta_1x_1,\dots,\beta_mx_1,x_{m+2},\dots,x_n)^T,\nonumber\\
\mathbf{u} &= (-u_1,\beta_1u_1,\dots,\beta_mu_1,-x_1u_1x_{m+2},\dots,-x_1u_1x_n)^T,\nonumber\\
\mathbf{y} &= (\alpha_1x_1,\dots,\alpha_nx_1)^T,\nonumber\\
\mathbf{v} &= (\alpha_1u_1,\dots,\alpha_nu_1)^T,
\end{align*}
where $\beta_j\in \{-1,1\}$ for all $j\in [1:m]$ and $\alpha_i \in \{-1,1\}$ for all $i\in [1:n]$. We observe that the second factor matrix $\mathbf{X}_2 = [\mathbf{y},\mathbf{v}]$ is also \textit{rank-deficient}. We calculate the contribution of this event in the following. 
The probability of the given tuple of factors is
\begin{align*}
&P_X(x_1)P_U(-u_1)P_X^m(x_1\bm{\beta})P_{U}^m(u_1\bm{\beta})\nonumber\\
&\times P_X^{n-m-1}(\mathbf{x}_{m+2}^n)P_U^{n-m-1}(-x_1u_1\mathbf{x}_{m+2}^n)P_Y^n(x_1\bm{\alpha})P_V^n(u_1\bm{\alpha}),
\end{align*}
where $\mathbf{x}_{m+2}^n=(x_{m+2},\dots,x_n)$. Summing over $\bm{\alpha}$ we obtain
\begin{displaymath}
\sum_{\bm{\alpha}}P_Y^n(x_1\bm{\alpha})P_V^n(u_1\bm{\alpha}) = [\sum_{\alpha}P_Y(x_1\alpha)P_V(u_1\alpha)]^n.
\end{displaymath}
Summing over $\bm{\beta}$ we obtain
\begin{displaymath}
\sum_{\bm{\beta}}P_X^m(x_1\bm{\beta})P_{U}^m(u_1\bm{\beta}) = [\sum_{\beta}P_X(x_1\beta)P_U(u_1\beta)]^m.
\end{displaymath}
Summing over $\mathbf{x}_{m+2}^n$ we obtain
\begin{displaymath}
\sum_{\mathbf{x}_{m+2}^n}P_X^{n-m-1}(\mathbf{x}_{m+2}^n)P_U^{n-m-1}(-x_1u_1\mathbf{x}_{m+2}^n) = \big[\sum_xP_X(x)P_U(-x_1u_1x)\big]^{n-m-1}.
\end{displaymath}
Finally by summing over the all possible choices of $\{j_1,\dots,j_m\}$, the number of rows $m$, $x_1$ and $u_1$, the total probability of this event is hence
\begin{align*}
&\sum_{x_1,u_1}P_X(x_1)P_U(-u_1)[\sum_{\alpha}P_Y(x_1\alpha)P_V(u_1\alpha)]^n\nonumber\\
&\times \sum_{m=1}^{n-1}\binom{n-1}{m}[\sum_{\beta}P_X(x_1\beta)P_U(u_1\beta)]^m[\sum_xP_X(x)P_U(-x_1u_1x)]^{n-m-1}\nonumber\\
& = [\sum_xP_X(x)P_U(-x)][\sum_{\alpha}P_Y(\alpha)P_V(\alpha)]^n\nonumber\\
&\times \sum_{m=1}^{n-1}\binom{n-1}{m}[\sum_{\beta}P_X(\beta)P_U(\beta)]^m[\sum_xP_X(x)P_U(-x)]^{n-m-1}\nonumber\\
&+[\sum_xP_X(x)P_U(x)][\sum_{\alpha}P_Y(\alpha)P_V(-\alpha)]^n\nonumber\\
&\times \sum_{m=1}^{n-1}\binom{n-1}{m}[\sum_{\beta}P_X(\beta)P_U(-\beta)]^m[\sum_xP_X(x)P_U(x)]^{n-m-1}\nonumber\\
&\leq [\sum_xP_X(x)P_U(-x)][\sum_{\alpha}P_Y(\alpha)P_V(\alpha)]^n \nonumber\\
&+ [\sum_xP_X(x)P_U(x)][\sum_{\alpha}P_Y(\alpha)P_V(-\alpha)]^n\to 0.
\end{align*}
The last inequality is valid since the following reduction holds
\begin{align*}
\sum_{m=1}^{n-1}&\binom{n-1}{m}[\sum_{\beta}P_X(\beta)P_U(\beta)]^m[\sum_xP_X(x)P_U(-x)]^{n-m-1}\nonumber\\
&  = \big(\sum_{\beta}P_X(\beta)P_U(\beta) + \sum_xP_X(x)P_U(-x)\big)^{n-1} - \big(\sum_xP_X(x)P_U(-x)\big)^{n-1}\nonumber\\
&= 1-\big(\sum_xP_X(x)P_U(-x)\big)^{n-1}.
\end{align*}

\end{itemize}
Using similar lines of arguments we can show that in this case the minimum compression threshold is $C^{\star}_t = H(P_X) + H(P_Y) + H(P_U) + H(P_V)$.

\section{Compression of multi-component tensor}\label{sec_5}
Recall that a multi-component tensor has the following form
\begin{displaymath}
\mathbf{T} = [\mathbf{X}_1;\cdots;\mathbf{X}_N] = \sum_{r=1}^R \mathbf{a}_{r1}\circ\cdots\circ\mathbf{a}_{rN},
\end{displaymath}
where $R\geq 2$ holds. 
In the last section we have seen that for some tensors such as the all 0 tensor, the number of tuples of factors grows exponentially with $n$. We also observe that some factor matrices in these cases are rank-deficient. Therefore, a workaround idea for the general scenario would be restricting our attention to the set of tuples of full rank factor matrices, i.e., all factor matrices in a given tuple are full rank. This does not immediately guarantee that there would not exist a set of tensors $\mathbf{T}$ admitting a growing number of factorizations which perhaps has non-vanishing probability. Fortunately, our analysis in the following shows that the case does not occur. In the following we first show that a random factor matrix is full rank with high probability.
\begin{lemma}\label{lemma_1}
For each $i\in [1:N]$ we have 
\begin{displaymath}
\mathrm{Pr}\{\mathrm{rank}(\mathbf{X}_i)=R\}\to 1,\;\text{as}\; n\to\infty.
\end{displaymath}
\end{lemma}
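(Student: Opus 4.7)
The plan is to reduce the full-rankness of the tall $n\times R$ matrix $\mathbf{X}_i$ to the existence of at least one invertible $R\times R$ submatrix, and then to exploit independence across disjoint row blocks. Concretely, I would partition the rows of $\mathbf{X}_i$ into $K=\lfloor n/R\rfloor$ disjoint blocks of $R$ consecutive rows, yielding $K$ mutually independent $R\times R$ random submatrices $M_1,\ldots,M_K$, each having the same law: columns are independent, and column $r$ has i.i.d.\ entries with marginal $P_{\mathcal{X}_i,r}$. Because the existence of even one invertible $M_k$ already forces $\mathrm{rank}(\mathbf{X}_i)=R$ (once $n\geq R$), it suffices to establish the single-block positivity
\[
p_i := \mathrm{Pr}\{\det(M_1)\neq 0\} > 0,
\]
from which $\mathrm{Pr}\{\mathrm{rank}(\mathbf{X}_i)<R\}\leq (1-p_i)^K\to 0$ as $n\to\infty$.

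To obtain $p_i>0$ I would view $M_1$ as $R$ i.i.d.\ rows drawn from the product distribution $Q=\prod_{r=1}^R P_{\mathcal{X}_i,r}$ on $\mathbb{R}^R$, and prove by induction on $R$ that these rows are linearly independent with positive probability. The key structural input is the standing assumption $\max_{r,x}P_{\mathcal{X}_i,r}(x)<1$, which forces $|\mathrm{supp}(P_{\mathcal{X}_i,r})|\geq 2$ for every $r$; hence $\mathrm{supp}(Q)=\prod_r \mathrm{supp}(P_{\mathcal{X}_i,r})$ is not contained in any proper linear subspace of $\mathbb{R}^R$, because any putative linear relation $\sum_r c_r x_r=0$ holding on a product of sets each of cardinality at least two forces $c_r=0$ for every $r$. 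The inductive step then conditions on the first $R-1$ rows being linearly independent, an event of positive probability by the induction hypothesis; given such a realization, their span is a deterministic $(R-1)$-dimensional subspace $V\subsetneq\mathbb{R}^R$, and $Q$ assigns positive mass to the complement of $V$ by the non-containment just derived.

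The only delicate point I anticipate is the subspace-avoidance step in the induction, where one must correctly combine, through a conditional probability decomposition, the inductive hypothesis (positive probability of a linearly independent prefix) with the per-realization fact that the next row escapes the corresponding span with positive probability. Fortunately no uniform lower bound on the avoidance probability is required, since the outer block argument only needs $p_i>0$ rather than an effective rate; this is what makes the passage from one-block positivity to the $n\to\infty$ conclusion immediate.
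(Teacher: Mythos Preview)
Your argument is correct, but it follows a different route from the paper. The paper works \emph{column-by-column}: it bounds, for each $r$, the probability that column $r+1$ lands in the span of the previous columns by observing that a vector in a subspace of dimension $t\leq r$ is determined by $t$ coordinates, so the remaining $n-t$ coordinates of an i.i.d.\ column are pinned, costing at most $\rho_i^{n-r}$ where $\rho_i=\max_{x,k}P_{\mathcal{X}_i,k}(x)<1$. Summing over $r$ gives the explicit bound $\sum_{r=0}^{R-1}\rho_i^{n-r}$. You instead work \emph{row-by-row}: you reduce to a single $R\times R$ block and show its rows, which are i.i.d.\ from the product law $Q=\prod_r P_{\mathcal{X}_i,r}$, are linearly independent with some fixed probability $p_i>0$, using the clean observation that a product of sets of size $\geq 2$ cannot lie in a proper linear subspace; independence across $\lfloor n/R\rfloor$ blocks then yields $(1-p_i)^{\lfloor n/R\rfloor}\to 0$. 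The paper's approach delivers an explicit exponential rate in terms of $\rho_i$ (and is the style of argument they later cite from \cite{tao2006random}), whereas your approach is more self-contained and sidesteps the enumeration over subspaces, at the cost of an implicit and somewhat weaker constant. One minor phrasing issue: your ``induction on $R$'' is really an induction on the number of rows $k=1,\ldots,R$ with the ambient dimension $R$ held fixed; the argument you describe is exactly that, and it is sound.
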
 
 When $P_{\mathcal{X}_{i},r} = P_i$ for all $r\in [1:R]$ the result can be deduced from the fact that the probability of a square random matrix with iid elements being singular is vanishing \cite{komlos1968determinant}. For our setting we use arguments in \cite{tao2006random}. 
\begin{proof} 
For a given $i\in [1:N]$, $\mathbf{X}_i$ is rank-deficient, i.e., $\mathrm{rank}(\mathbf{X}_i)< R$, implies that one of the following events happens
\begin{displaymath}
\mathcal{A}_{i0} = \{\mathbf{X}_{i1} = \mathbf{0}\},\quad \mathcal{A}_{ir} = \{\mathbf{X}_{i(r+1)}\in \mathrm{span}(\mathbf{X}_{i1},\dots,\mathbf{X}_{ir})\},\; r=1, \dots, R-1.
\end{displaymath}
First of all we have
\begin{displaymath}
\mathrm{Pr}\{\mathcal{A}_{i0}\} = P_{\mathcal{X}_{i,1}}(0)^n <\rho_i^n,
\end{displaymath}
where $\rho_i = \max_{x,k}P_{{\mathcal{X}}_i,k}(x)<1$ by our assumption. The inequality also holds when $0\notin \mathcal{X}_i$ occurs.
For each $r=1,\dots,R-1,$ by summing over all possible vector subspaces $V$, which is finite, we have
\begin{displaymath}
\mathrm{Pr}\{\mathcal{A}_{ir}\cap\mathcal{A}_{i0}^c\} = \sum_{V}\mathrm{Pr}\{\mathrm{span}(\mathbf{X}_{i1},\dots,\mathbf{X}_{ir}) = V,\mathbf{X}_{i1} \neq \mathbf{0}\}\mathrm{Pr}\{\mathbf{X}_{i(r+1)}\in V\}.
\end{displaymath}
Given a vector subspace $V$ with dimension $t=\text{dim(V)}\in [1:r]$, we can determine each vector $\mathbf{v}\in V$ completely based on $t$ coordinates, for instance we can calculate $(v_{t+1},\dots,v_n)$ based on $(v_1,\dots,v_t)$. Without the loss of generality we denote this relation by $(v_{t+1},\dots,v_n) = f_V(v_1,\dots,v_t)$. Therefore in this case
\begin{align}
\mathrm{Pr}\{\mathbf{X}_{i(r+1)}\in V\} \leq &\sum_{(x_1,\dots,x_t)\in\mathcal{X}_i^t}\mathrm{Pr}\{(X_{i(r+1),1},\dots,X_{i(r+1),t})=(x_1,\dots,x_t)\}\nonumber\\
&\times \mathrm{Pr}\{(X_{i(r+1),t+1},\dots,X_{i(r+1),n})=f_V(x_1,\dots,x_t)\}\nonumber\\
&\leq \sum_{(x_1,\dots,x_t)\in\mathcal{X}_i^t}\mathrm{Pr}\{(X_{i(r+1),1},\dots,X_{i(r+1),t})=(x_1,\dots,x_t)\}\times \rho_i^{n-t}\nonumber\\
& = \rho_i^{n-t}\leq \rho_i^{n-r}
\end{align}
as $\rho_i<1$ holds.
In summary we obtain
\begin{align}
\mathrm{Pr}\{\mathrm{rank}(\mathbf{X}_i)\leq R\}&\leq \mathrm{Pr}\{\mathcal{A}_{i0}\} + \sum_{r=1}^{R-1}\mathrm{Pr}\{\mathcal{A}_{ir}\cap \mathcal{A}_{i0}^c\}\nonumber\\
&\leq \sum_{r=0}^{R-1}\rho_i^{n-r} = \zeta_i\to 0, \;\text{as}\; n\to\infty.
\end{align}
\end{proof}
We need a link between the full rank property of each factor matrix and the number of tuples of full rank factor matrices resulting in a given tensor. In case $N\geq 3$ this link is established through the Kruskal's rank (or k-rank) introduced in \cite{kruskal1977three}.
\begin{definition}
The $k$-rank of a matrix $\mathbf{A}$, $k_{\mathbf{A}}$, is the largest value of $t$ such that every subset of $t$ columns of $\mathbf{A}$ is linearly independent.
\end{definition}
We always have $k_{\mathbf{A}}\leq \mathrm{rank}(\mathbf{A})$. For a full-column rank matrix $\mathbf{A}$, we have $k_{\mathbf{A}} = \mathrm{rank}(\mathbf{A})$. Lemma \ref{lemma_1} implies the following result.
\begin{corollary}\label{coroll_1}
 For each $i\in [1:N]$ we have
\begin{displaymath}
\mathrm{Pr}\{k_{\mathbf{X}_i}=R\} \geq 1-\zeta_i\to 1,\;\text{as}\; n\to \infty.
\end{displaymath}
\end{corollary}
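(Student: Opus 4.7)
\textbf{Proof plan for Corollary \ref{coroll_1}.} The plan is to argue this as a direct consequence of Lemma \ref{lemma_1} together with the elementary linear-algebraic fact stated in the paragraph just above the corollary, namely that $k_{\mathbf{A}} = \mathrm{rank}(\mathbf{A})$ whenever $\mathbf{A}$ has full column rank. The matrix $\mathbf{X}_i$ has size $n\times R$ with $R$ fixed and $n\to\infty$, so for all sufficiently large $n$ we have $n\geq R$ and hence $\mathrm{rank}(\mathbf{X}_i)\leq R$ always, with equality exactly when $\mathbf{X}_i$ is of full column rank.

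First I would establish the event inclusion
\begin{displaymath}
\{\mathrm{rank}(\mathbf{X}_i)=R\}\subseteq \{k_{\mathbf{X}_i}=R\},
\end{displaymath}
which is immediate: on the left event $\mathbf{X}_i$ is of full column rank $R$, so by the recalled fact $k_{\mathbf{X}_i}=\mathrm{rank}(\mathbf{X}_i)=R$. Combined with the universal bound $k_{\mathbf{X}_i}\leq \mathrm{rank}(\mathbf{X}_i)\leq R$, the two events actually coincide, but only the inclusion is needed.

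Second, I would apply Lemma \ref{lemma_1}, which gives $\mathrm{Pr}\{\mathrm{rank}(\mathbf{X}_i)=R\}\geq 1-\zeta_i$ where $\zeta_i=\sum_{r=0}^{R-1}\rho_i^{n-r}$ and $\rho_i=\max_{x,k}P_{\mathcal{X}_i,k}(x)<1$. Combining with the inclusion gives
\begin{displaymath}
\mathrm{Pr}\{k_{\mathbf{X}_i}=R\}\geq \mathrm{Pr}\{\mathrm{rank}(\mathbf{X}_i)=R\}\geq 1-\zeta_i,
\end{displaymath}
and since $\rho_i<1$ with $R$ fixed, $\zeta_i\to 0$ as $n\to\infty$, yielding the claim.

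There is no real obstacle here; the statement is a one-line corollary that simply transfers the full-rank probability bound of Lemma \ref{lemma_1} through the tautology ``full column rank $\Rightarrow$ maximal $k$-rank.'' The substantive work was already done inside Lemma \ref{lemma_1}, where the adaptation of the Komlós/Tao--Vu style singularity argument to the non-i.i.d.\ column distributions $P_{\mathcal{X}_i,r}$ was carried out.
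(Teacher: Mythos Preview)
Your proposal is correct and matches the paper's approach exactly: the paper does not even write out a proof, simply stating that Lemma~\ref{lemma_1} implies the corollary, and your argument spells out precisely the intended one-line deduction via the fact that full column rank forces $k_{\mathbf{X}_i}=\mathrm{rank}(\mathbf{X}_i)=R$.
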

A sufficient condition based on $k$-rank that enables us to bound the number of tuples of full rank factor matrices yielding the same tensor is given below.
\begin{theorem}[{\cite[Theorem 3]{sidiropoulos2000uniqueness}}]\label{thm_r}
Assume that the order of tensor $N$ satisfies $N\geq 3$. Given deterministic matrices $(\mathbf{X}_i)_{i=1}^N$ of size $I_i\times R$ satisfying $\mathbf{Z} = [\mathbf{X}_1;\cdots;\mathbf{X}_N]$ such that $R$ is the rank of tensor $\mathbf{Z}$, i.e., the minimum number of rank-one decompositions of $\mathbf{Z}$, if 
\begin{align}
 \sum_{i=1}^N k_{\mathbf{X}_i}\geq 2R+(N-1),
\end{align}
Then the matrices $\mathbf{X}_i$ are essentially uniquely determined. 
\end{theorem}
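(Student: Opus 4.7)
Since this is quoted as Theorem 3 of Sidiropoulos and Bro, in practice I would simply defer to that reference, but if I had to reprove it from scratch my plan would be to reduce the general $N \geq 3$ statement to Kruskal's classical $N=3$ uniqueness theorem via a Khatri--Rao grouping trick. First I would partition $[1:N]$ into three nonempty disjoint index sets $I_1, I_2, I_3$ and define compressed factors $\mathbf{A}_j = \bigodot_{i \in I_j} \mathbf{X}_i$ by column-wise Kronecker products. The algebraic point is that $\mathbf{Z}$ rewrites, through its matrix unfoldings, as a 3-way CP tensor with factors $(\mathbf{A}_1, \mathbf{A}_2, \mathbf{A}_3)$, so any essential-uniqueness statement at the 3-way level will lift back to one at the $N$-way level, provided each $\mathbf{X}_i$ has full column rank (which the hypothesis forces, since $k_{\mathbf{X}_i} \leq R$ for all $i$ and the sum must reach $2R+(N-1)$).

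Next I would apply the standard Khatri--Rao $k$-rank inequality $k_{\mathbf{A} \odot \mathbf{B}} \geq \min\{k_{\mathbf{A}} + k_{\mathbf{B}} - 1, R\}$ iteratively inside each group, to get $k_{\mathbf{A}_j} \geq \min\{R,\ \sum_{i \in I_j} k_{\mathbf{X}_i} - (|I_j|-1)\}$. Outside the easy case in which some $k_{\mathbf{A}_j}$ already saturates at $R$ (handled by a direct full-column-rank argument on the corresponding mode unfolding together with standard rank-one matrix uniqueness), summing over $j$ gives
\[
\sum_{j=1}^3 k_{\mathbf{A}_j} \;\geq\; \sum_{i=1}^N k_{\mathbf{X}_i} - (N-3) \;\geq\; 2R + (N-1) - (N-3) \;=\; 2R + 2,
\]
which is exactly the hypothesis of Kruskal's 3-way theorem.

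Invoking Kruskal's $N=3$ result on $(\mathbf{A}_1, \mathbf{A}_2, \mathbf{A}_3)$, any alternative rank-$R$ decomposition of $\mathbf{Z}$ produces compressed factors that agree with $\mathbf{A}_j$ up to a common column permutation and a triple of diagonal scalings whose product is the identity. Pulling this back through the Khatri--Rao structure, and using that each $\mathbf{X}_i$ is full column rank so that a Kronecker factorization of its columns is unique up to scalar, one recovers the $N$-way essential uniqueness: the alternative $(\mathbf{X}_i')_{i=1}^N$ satisfies $\mathbf{X}_i' = \mathbf{X}_i \mathbf{P}\mathbf{D}_i$ with a common permutation $\mathbf{P}$ and $\prod_i \mathbf{D}_i = \mathbf{I}$.

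The hard part is Kruskal's permutation lemma, which underlies the $N=3$ statement I am black-boxing: it is a delicate combinatorial-geometric claim about when two finite multisets of vectors with prescribed $k$-rank bounds must coincide up to permutation and scaling, and its proof genuinely uses the full strength of the $2R+2$ threshold through an intricate induction on subset spans. Reproducing that argument is the only nontrivial step; by comparison, the Khatri--Rao reduction, the $k$-rank arithmetic, and the Kronecker-split lifting are routine book-keeping.
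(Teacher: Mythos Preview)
The paper does not prove this theorem at all; it is quoted verbatim from \cite{sidiropoulos2000uniqueness} and used as a black box in the proof of Lemma~\ref{lemm_2}. You correctly identify this at the outset, and your sketch of how to reprove it---group the $N$ modes into three via Khatri--Rao products, transfer the $k$-rank budget using $k_{\mathbf{A}\odot\mathbf{B}}\geq\min\{k_{\mathbf{A}}+k_{\mathbf{B}}-1,R\}$, invoke Kruskal's $N=3$ theorem, then split the Kronecker columns back---is exactly the strategy of the original Sidiropoulos--Bro paper, so there is no methodological divergence to discuss.

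One small inaccuracy worth flagging: the hypothesis $\sum_i k_{\mathbf{X}_i}\geq 2R+(N-1)$ does \emph{not} force every $\mathbf{X}_i$ to have full column rank (e.g.\ $N=3$, $R=3$, $k$-ranks $(3,3,2)$), contrary to your parenthetical. Fortunately your lifting step does not actually need this: once the three-way Kruskal argument pins down each column of $\mathbf{A}_j$ up to permutation and scaling, the column-wise Kronecker split is unique up to scalars simply because a nonzero rank-one matrix has a unique outer-product factorization, irrespective of linear independence across columns. The handling of the ``saturated'' case where some $k_{\mathbf{A}_j}=R$ also deserves more care than a one-line dismissal---in the original proof this is where the choice of grouping matters---but the overall architecture is sound.
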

The essential uniqueness means that if $(\mathbf{X}_i^{\prime})_{i=1}^N$ is another tuple of factor matrices satisfying $\mathbf{Z} = [\mathbf{X}_1^{\prime};\cdots;\mathbf{X}_N^{\prime}]$ then for all $i\in [1:N]$, $\mathbf{X}_i^{\prime} = \mathbf{X}_i\mathbf{P}\bm{\Lambda}_i$ holds where $\mathbf{P} \in \{0,1\}^{R\times R}$ is a unique permutation matrix and $\bm{\Lambda}_i$ are unique diagonal matrices of size $R$ satisfying $\prod_{i=1}^N\bm{\Lambda}_i = \mathbf{I}_{R\times R}$. 

We then have the following upper bound on the number of full rank factors yielding the same tensor.
\begin{lemma}\label{lemm_2}
For given $N\geq 2$, $R\geq 2$, $n\geq R$, and finite alphabets $(\mathcal{X}_i)_{i=1}^N$, there exists a number $\Gamma_N$ that does not depend on $n$ such that the number of tuples of full rank matrices $(\mathbf{X}_i)_{i=1}^N$, $\mathbf{X}_i\in\mathcal{X}_i^{n\times R}$ for all $i\in [1:N]$, satisfying $\mathbf{T} = [\mathbf{X}_1;\dots;\mathbf{X}_N]$ is upper bounded by $\Gamma_N$.
\end{lemma}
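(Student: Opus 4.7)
The plan is to split the argument into two regimes, since Theorem~\ref{thm_r} is stated only for $N\geq 3$. For $N\geq 3$ I would invoke Kruskal's essential uniqueness and then count how many permutation/scaling ambiguities remain consistent with the finite alphabets. For $N=2$ I would instead use elementary matrix-factorization uniqueness, again closed off by the finite-alphabet constraint. In both regimes the resulting bound depends only on $R$, $N$, and the $|\mathcal{X}_i|$, never on $n$.

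First, for $N\geq 3$: because each $\mathbf{X}_i$ has full column rank, $k_{\mathbf{X}_i}=R$ and hence $\sum_{i=1}^N k_{\mathbf{X}_i}=NR$. Since $R\geq 2$ and $N\geq 3$ give $R(N-2)\geq N-1$, i.e., $NR\geq 2R+(N-1)$, the Kruskal condition of Theorem~\ref{thm_r} is satisfied. Applying that theorem, every other full-rank tuple $(\mathbf{X}'_i)_{i=1}^N$ with $[\mathbf{X}'_1;\cdots;\mathbf{X}'_N]=\mathbf{T}$ must be of the form $\mathbf{X}'_i=\mathbf{X}_i\mathbf{P}\bm{\Lambda}_i$ for a common permutation matrix $\mathbf{P}$ and diagonal matrices $\bm{\Lambda}_i$ with $\prod_{i=1}^N\bm{\Lambda}_i=\mathbf{I}_R$. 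There are $R!$ permutations. For each column index $r$, the corresponding column of $\mathbf{X}_i$ is non-zero (no column of a full column-rank matrix vanishes), so it contains some non-zero entry $a\in\mathcal{X}_i$; the requirement that the scaled column stay in $\mathcal{X}_i^n$ forces $\lambda_{i,r}\cdot a\in\mathcal{X}_i$, leaving at most $|\mathcal{X}_i|$ candidates for $\lambda_{i,r}$. With $\prod_{i=1}^N\lambda_{i,r}=1$ removing one degree of freedom, at most $\prod_{i=1}^{N-1}|\mathcal{X}_i|$ scaling tuples per column survive; multiplying over the $R$ columns and the $R!$ permutations yields $\Gamma_N\leq R!\cdot\prod_{i=1}^{N-1}|\mathcal{X}_i|^R$, independent of $n$.

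For $N=2$ Kruskal does not apply and I would argue directly. Here $\mathbf{T}=\mathbf{X}_1\mathbf{X}_2^T$ with both factors of full column rank, so $\mathrm{rank}(\mathbf{T})=R$ and $\mathrm{col}(\mathbf{T})=\mathrm{col}(\mathbf{X}_1)$. Thus every competing full-rank factorization $\mathbf{T}=\mathbf{X}'_1(\mathbf{X}'_2)^T$ satisfies $\mathbf{X}'_1=\mathbf{X}_1\mathbf{A}$ and $\mathbf{X}'_2=\mathbf{X}_2\mathbf{A}^{-T}$ for some invertible $R\times R$ matrix $\mathbf{A}$. Since $n\geq R$ and $\mathbf{X}_1$ is full column rank, I can select $R$ rows of $\mathbf{X}_1$ that form an invertible $R\times R$ submatrix $\mathbf{B}$. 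The alphabet constraint $\mathbf{X}_1\mathbf{A}\in\mathcal{X}_1^{n\times R}$ forces $\mathbf{B}\mathbf{A}\in\mathcal{X}_1^{R\times R}$, giving at most $|\mathcal{X}_1|^{R^2}$ possibilities; invertibility of $\mathbf{B}$ then recovers $\mathbf{A}=\mathbf{B}^{-1}(\mathbf{B}\mathbf{A})$ uniquely. Hence $\Gamma_2\leq|\mathcal{X}_1|^{R^2}$.

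The main obstacle, as I see it, is not Kruskal's theorem itself but the passage from essential uniqueness (which a priori allows continuous scaling $\lambda_{i,r}\in\mathbb{R}$) to a bounded count: the crux is the observation that once entries live in a fixed finite alphabet and columns are non-zero, the admissible scalings are forced into a finite set of ratios of alphabet elements. A minor secondary point to verify is that Theorem~\ref{thm_r} requires $R$ to be the tensor rank of $\mathbf{T}$; this is automatic here because the Kruskal condition with $k_{\mathbf{X}_i}=R$ already forces the given $R$-term decomposition to be a rank decomposition.
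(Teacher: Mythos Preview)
Your proposal is correct and follows essentially the same two-case strategy as the paper: Kruskal essential uniqueness plus finite-alphabet counting for $N\geq 3$, and full-rank matrix factorization (an invertible $R\times R$ change of basis) plus finite-alphabet counting for $N=2$. The only noteworthy difference is that the paper explicitly verifies the hypothesis $\mathrm{rank}(\mathbf{T})=R$ of Theorem~\ref{thm_r} via a Khatri--Rao product and Sylvester rank-inequality argument, whereas you appeal to the stronger form of Kruskal's theorem in which this is a conclusion rather than a hypothesis; both routes are valid, and your explicit numerical bounds $R!\prod_{i=1}^{N-1}|\mathcal{X}_i|^{R}$ and $|\mathcal{X}_1|^{R^2}$ sharpen the paper's qualitative statement.
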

\begin{proof}
Assume that $N\geq 3$ holds. Note that when $\mathbf{X}_i$ are full rank $R$ for all $i\in [1:N]$, then rank of $\mathbf{T} = [\mathbf{X}_1;\cdots;\mathbf{X}_N]$ is $R$. Assume otherwise that $\mathbf{T} = [\mathbf{X}_1^{\prime};\cdots;\mathbf{X}_N^{\prime}]$ where $\mathbf{X}_i^{\prime}$ is of size $I_i\times (R-1)$ for all $i\in [1:N]$. By unfolding the tensor $\mathbf{T}$ according to the first dimension we obtain
\begin{align}
(\mathbf{X}_N\odot\cdots\odot \mathbf{X}_2)\mathbf{X}_1^T &= (\mathbf{X}_N^{\prime}\odot\cdots\odot \mathbf{X}_2^{\prime}){\mathbf{X}_1^{\prime}}^T,\label{unfolding}
\end{align}
where $\odot$ is the Khatri-Rao product. We need the following property of the Khatri-Rao product. Since $\mathbf{X}_3$ and $\mathbf{X}_2$ are full rank, $k_{\mathbf{X}_3} = k_{\mathbf{X}_2} = R$, we have $\mathrm{rank}(\mathbf{X}_3\odot \mathbf{X}_2)\geq k_{\mathbf{X}_3\odot \mathbf{X}_2}\geq \min\{k_{\mathbf{X}_3} + k_{\mathbf{X}_2} -1, R\}\geq R$ by \cite[Lemma 3.3]{stegeman2007kruskal}. By applying this inequality consecutively we obtain
\begin{displaymath}
k_{\mathbf{X}_i\odot \mathbf{X}_{i-1}\odot\cdot\odot \mathbf{X}_2}\geq \min\{k_{\mathbf{X}_i} + k_{\mathbf{X}_{i-1}\odot\cdot\odot\mathbf{X}_2} -1, R\}\geq R,
\end{displaymath}
since $k_{\mathbf{X}_{i-1}\odot\cdot\odot\mathbf{X}_2}\geq R$ for all $i=4,\dots,N$. Therefore we have
\begin{displaymath}
\mathrm{rank}(\mathbf{X}_N\odot\cdots\odot\mathbf{X}_2) \geq k_{\mathbf{X}_N\odot\cdots\odot\mathbf{X}_2}\geq R.
\end{displaymath}
This gives the contradiction as the the right-hand side of \eqref{unfolding} has rank at most $R-1$ while by the Sylvester’s rank inequality
\begin{displaymath}
\mathrm{rank}((\mathbf{X}_N\odot\cdots\odot\mathbf{X}_2)\mathbf{X}_1^T)\geq \mathrm{rank}(\mathbf{X}_N\odot\cdots\odot\mathbf{X}_2) + \mathrm{rank}(\mathbf{X}_1) - R \geq R
\end{displaymath}
the rank of the left-hand side is $R$. 

Since the rank of $\mathbf{T}$ is $R$, the condition of Theorem \ref{thm_r} is satisfied as $R\geq 1 + 1/(N-2)$ holds. This implies that if $(\mathbf{X}_i^{\prime})_{i=1}^N$, $\mathbf{X}_i^{\prime}\in\mathcal{X}_i^{n\times R}$, $\forall i\in [1:N]$, is another tuple of full rank factor matrices satisfying $\mathbf{T} = [\mathbf{X}_1^{\prime};\cdots;\mathbf{X}_N^{\prime}]$ we then have $\mathbf{X}_i^{\prime} = \mathbf{X}_i\mathbf{P}\bm{\Lambda}_i$ for all $i\in [1:N]$.
Since our alphabets are discrete, for each $i\in [1:N]$ the number of such $\bm{\Lambda}_i$ is finite and does not depend on $n$. We denote the upper bound on the number of different matrices $\mathbf{P}\bm{\Lambda}_i$ by $\Gamma_N$. Hence the conclusion holds in this case.

When $N=2$, i.e., $\mathbf{T}$ is a square matrix, the essential uniqueness does not hold in general. We recall that when $N=2$ a tensor $\mathbf{T}$ can be written as
\begin{displaymath}
\mathbf{T} = \mathbf{X}_1\mathbf{X}_2^T.
\end{displaymath}
When $\mathbf{X}_1$ and $\mathbf{X}_2$ are of rank $R$ then the above expression is a full rank factorization of $\mathbf{T}$. If $\mathbf{T} = \mathbf{X}_1^{\prime}{\mathbf{X}_2^{\prime}}^T$ is another full rank factorization of $\mathbf{T}$, then there exists an invertible matrix $\mathbf{W}\in\mathbb{R}^{R\times R}$ such that $\mathbf{X}_1^{\prime} = \mathbf{X}_1\mathbf{W}$ and $\mathbf{X}_2^{\prime} = \mathbf{X}_2(\mathbf{W}^{-1})^T$ due to \cite[Theorem 2]{piziak1999full}. Since the alphabets in our study are finite, the number of realizations of the principal minor $\{\mathbf{X}_1\}_{1:R,1:R}$ is finite. Therefore the number of such matrices $\mathbf{W}$ is finite and does not depend on $n$. We denote the corresponding upper bound on the number of matrices $\mathbf{W}$ by $\Gamma_2$.
\end{proof}
By combining Lemma \ref{lemma_1} and Lemma \ref{lemm_2} our result when the order of tensor $N\geq 2$ and the number of components $R\geq 2$ is given in the following.
\begin{theorem}\label{thm_2}
The minimum almost-lossless compression threshold is given by
\begin{displaymath}
C^{\star}_t = \sum_{i,r}H(P_{\mathcal{X}_i,r}).
\end{displaymath}
\end{theorem}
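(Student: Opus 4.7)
I would mimic the typicality construction of Theorem \ref{thm_1}, now applied to each column of each factor matrix. For each $(i,r)\in[1:N]\times[1:R]$ define
\[
\mathcal{A}_{i,r,\gamma}^n = \{\mathbf{a}\in\mathcal{X}_i^n\mid |-\log P_{\mathcal{X}_i,r}^n(\mathbf{a})-nH(P_{\mathcal{X}_i,r})|<n\gamma\},
\]
assign distinct indices to the (at most $\prod_{i,r}|\mathcal{A}_{i,r,\gamma}^n|$ many) tensors in the image of $\bigtimes_{i,r}\mathcal{A}_{i,r,\gamma}^n$ under the rank-$R$ CPD map, and send every other tensor to a fixed error index. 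Independence of the $NR$ columns gives a jointly-typical probability of at least $(1-\gamma)^{NR}$, while the storage cost is at most $n(\sum_{i,r}H(P_{\mathcal{X}_i,r})+NR\gamma)$ nats, so letting $\gamma$ shrink yields $C^{\star}_t\leq \sum_{i,r}H(P_{\mathcal{X}_i,r})$.

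\textbf{Converse.} Fix a scheme with $\log|\mathcal{M}|\leq n(C+\epsilon)$ and error $\leq \epsilon$, set $\mathcal{D}_n=\{\mathbf{T}\mid \psi_n(\phi_n(\mathbf{T}))=\mathbf{T}\}$, and let $\mathcal{F}_n=\bigcap_{i=1}^N\{\mathrm{rank}(\mathbf{X}_i)=R\}$. The key idea is to replace the combinatorial factor $\min_i\prod_{j\neq i}|\mathcal{X}_j|$ from the rank-one converse with the uniqueness bound from Lemma \ref{lemm_2}. Concretely I would work with the full-rank expanded set
\[
\mathcal{S}_n = \{(\mathbf{X}_1,\dots,\mathbf{X}_N)\in \mathcal{F}_n\mid [\mathbf{X}_1;\dots;\mathbf{X}_N]\in\mathcal{D}_n\},
\]
for which Lemma \ref{lemm_2} gives $|\mathcal{S}_n|\leq \Gamma_N|\mathcal{D}_n|\leq \Gamma_N|\mathcal{M}|$, and Corollary \ref{coroll_1} together with the decoding-error bound gives $\mathrm{Pr}\{\mathcal{S}_n^c\}\leq \sum_i\zeta_i+\epsilon$. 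Shifting the atypical-set threshold of \eqref{typical_converse} by $\log\Gamma_N$, I define
\[
\mathcal{T}_{n,\eta} = \{(\mathbf{X}_1,\dots,\mathbf{X}_N)\mid -\textstyle\sum_{i,r}\log P_{\mathcal{X}_i,r}^n(\mathbf{a}_{ri})\geq \log|\mathcal{M}|+\log\Gamma_N+\eta\},
\]
and split, exactly as in the proof of Theorem \ref{thm_1},
\[
\mathrm{Pr}\{\mathcal{T}_{n,\eta}\}\leq |\mathcal{S}_n|e^{-\log|\mathcal{M}|-\log\Gamma_N-\eta}+\mathrm{Pr}\{\mathcal{S}_n^c\}\leq e^{-\eta}+\epsilon+\sum_i\zeta_i.
\]
With $\eta=n\gamma$, the weak law of large numbers applied to the $NR$ independent sums $-\log P_{\mathcal{X}_i,r}^n(\mathbf{a}_{ri})$ forces $\mathrm{Pr}\{\mathcal{T}_{n,\eta}\}\to 1$ whenever $\log|\mathcal{M}|+n\gamma<n(\sum_{i,r}H(P_{\mathcal{X}_i,r})-NR\gamma)$ holds infinitely often, contradicting the upper bound above. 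Hence $C\geq \sum_{i,r}H(P_{\mathcal{X}_i,r})-O(\gamma)-\epsilon$, and sending $\gamma,\epsilon\to 0$ closes the inequality.

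\textbf{Main obstacle.} The conceptual difficulty — that some tensors, like the all-zero ones in Examples 1 and 2, admit exponentially many factorizations — has already been absorbed into Lemmas \ref{lemma_1} and \ref{lemm_2}: the former makes the non-full-rank event negligible in probability, and the latter caps the per-tensor multiplicity among full-rank tuples by the $n$-independent constant $\Gamma_N$. After these two inputs the rest of the converse is bookkeeping: verifying that $\log\Gamma_N$ appears only as an additive shift in the atypicality threshold so that it vanishes after normalization by $n$, and that the union bound $\sum_i\zeta_i$ does not spoil the limit.
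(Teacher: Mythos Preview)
Your proposal is correct and follows essentially the same approach as the paper's proof: typicality-based achievability over the factor columns, and an information-spectrum converse in which Lemma~\ref{lemm_2} bounds the number of full-rank factorizations per tensor while Lemma~\ref{lemma_1}/Corollary~\ref{coroll_1} controls the rank-deficient event. The only differences are cosmetic---the paper groups the typical set by mode rather than by column, defines the expanded set $\mathcal{S}_n$ before intersecting with the full-rank event $\mathcal{E}_n$, and keeps $\Gamma_N$ as a multiplicative constant in the final bound rather than folding $\log\Gamma_N$ into the atypicality threshold---none of which changes the argument.
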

We now describe a full compression scheme similar to the one in Theorem \ref{thm_1}. 
\begin{proof}
For each mode $i\in [1:N]$ we define a typical set 
\begin{displaymath}
\mathcal{A}_{i,\gamma}^n = \big\{\mathbf{X}_i\mid |-\log P(\mathbf{X}_i)-n\sum_r H(P_{\mathcal{X}_i,r})|<n\gamma\big\},
\end{displaymath}
where for each $i\in [1:N]$ the probability of a realization of factor matrix $\mathbf{X}_i$, $P(\mathbf{X}_i)$, is given by $P(\mathbf{X}_i) = \prod_{r=1}^R P_{\mathcal{X}_i,r}^n(\mathbf{a}_{ir})$.
Then we define two sets $\mathcal{S}_{n,\gamma} = \bigtimes_{i=1}^N \mathcal{A}_{i,\gamma}^n$ and 
\begin{displaymath}
\mathcal{T}_{n,\gamma} = \{[\mathbf{X}_1;\cdots;\mathbf{X}_N]\mid (\mathbf{X}_1,\dots,\mathbf{X}_N) \in \mathcal{S}_{n,\gamma}\}.
\end{displaymath}
We index all elements in $\mathcal{T}_{n,\gamma}$. If $\mathbf{T}\in \mathcal{T}_{n,\gamma}$ we store the corresponding index, otherwise we store a given index. We need at most $n(\sum_{i,r} H(P_{\mathcal{X}_i,r})+\eta)$ nats for the indexing scheme. Hence we have $C^{\star}_t\leq \sum_{i,r} H(P_{\mathcal{X}_i,r})$.

In the converse direction, let $(\phi_n,\psi_n)$ be a sequence of tensor compression-reconstruction mappings such that the compression threshold $C$ is almost-losslessly achievable. For a given $\epsilon>0$ there exists a sufficiently large $n_0(\epsilon)$ such that
\begin{displaymath}
\mathrm{Pr}\{\psi_n(\phi_n(\mathbf{T}))\neq \mathbf{T}\}\leq \epsilon,\;\forall n\geq n_0(\epsilon).
\end{displaymath}
Similarly we define the decodable set of tensors
\begin{displaymath}
\mathcal{D}_n = \{\mathbf{T}\mid \phi_n(\psi_n(\mathbf{T})) = \mathbf{T}\},
\end{displaymath}
and the expanding set of corresponding tuples of factor matrices by
\begin{displaymath}
\mathcal{S}_n = \{(\mathbf{X}_i)_{i=1}^N\mid [\mathbf{X}_1;\cdots;\mathbf{X}_N] \in \mathcal{D}_n\}.
\end{displaymath}
We also define the set of tuples of full rank factors
\begin{displaymath}
\mathcal{E}_n = \{(\mathbf{X}_i)_{i=1}^N\mid \mathrm{rank}(\mathbf{X}_i) = R,\;\forall i\in [1:N]\}.
\end{displaymath}
By our previous analysis we have $|\mathcal{D}_n|\leq |\mathcal{M}|$ and $|\mathcal{S}_n\cap \mathcal{E}_n|\stackrel{\text{Lemma \ref{lemm_2}}}{\leq} \Gamma_N|\mathcal{D}_n|\leq \Gamma_N |\mathcal{M}|$ for some large enough constant $\Gamma_N$. We define an atypical set
\begin{displaymath}
\mathcal{T}_{n,\eta} = \{(\mathbf{X}_i)_{i=1}^N \mid -\sum_{i=1}^N\log P(\mathbf{X}_i)\geq \log|\mathcal{M}|+ \eta\}.
\end{displaymath}
We then have
\begin{align}
\mathrm{Pr}\{&(\mathbf{X}_i)_{i=1}^N \in \mathcal{T}_{n,\eta}\} = \mathrm{Pr}\{(\mathbf{X}_i)_{i=1}^N \in \mathcal{S}_n\cap \mathcal{E}_n\cap \mathcal{T}_{n,\eta}\}+ \mathrm{Pr}\{(\mathbf{X}_i)_{i=1}^N \in \big(\mathcal{S}_n\cap \mathcal{E}_n\big)^c\cap \mathcal{T}_{n,\eta}\}\nonumber\\
&\leq \mathrm{Pr}\{(\mathbf{X}_i)_{i=1}^N \in \mathcal{S}_n^c\} + \sum_{i}\mathrm{Pr}\{\mathrm{rank}(\mathbf{X}_i) < R\}  + \mathrm{Pr}\{(\mathbf{X}_i)_{i=1}^N \in \mathcal{S}_n\cap \mathcal{E}_n\cap \mathcal{T}_{n,\eta}\}\nonumber\\
& =\mathrm{Pr}\{\psi_n(\phi_n(\mathbf{T}))\neq \mathbf{T}\} + \sum_{i}\mathrm{Pr}\{\mathrm{rank}(\mathbf{X}_i) < R\}  + \sum_{(\mathbf{X}_i)_{i=1}^N\in \mathcal{S}_n\cap \mathcal{E}_n\cap \mathcal{T}_{n,\eta}}\prod_{i=1}^N P(\mathbf{X}_i)\nonumber\\
&\leq \epsilon + \sum_{i}\mathrm{Pr}\{\mathrm{rank}(\mathbf{X}_i) < R\} + |\mathcal{S}_n\cap \mathcal{E}_n\cap \mathcal{T}_{n,\eta}|e^{-\eta}/|\mathcal{M}|\nonumber\\
&\stackrel{\text{Lemma \ref{lemm_2}}}{\leq} \epsilon + \sum_{i}\mathrm{Pr}\{\mathrm{rank}(\mathbf{X}_i) < R\} + e^{-\eta}\Gamma_N.
\end{align}
We select $\eta = n\gamma$ where $\gamma>0$ is an arbitrary number and apply a similar line of reasoning as in the converse proof of Theorem \ref{thm_1}. If for all $n_1\geq n_0(\epsilon)$ there exists a $n>n_1$ such that $\log |\mathcal{M}| + \eta<\break n(\sum_{i,r}H(P_{\mathcal{X}_i,r})- NR\gamma)$, then we have 
\begin{displaymath}
\limsup_{n\to\infty} \mathrm{Pr}\{(\mathbf{X}_i)_{i=1}^N \in \mathcal{T}_{n,\eta}\} = 1,
\end{displaymath}
due to the weak law of large numbers. The last inequality is violated since by Lemma \ref{lemma_1} we have
\begin{displaymath}\sum_{i}\mathrm{Pr}\{\mathrm{rank}(\mathbf{X}_i) < R\}\to 0,\;\text{as}\; n\to\infty.
\end{displaymath}
 Therefore we must have $\log |\mathcal{M}| + n\gamma \geq n (\sum_{i,r}H(P_{\mathcal{X}_i,r}) - NR\gamma)$ for all $n\geq n_{\epsilon}$. Since $\gamma$ and $\epsilon$ are arbitrary we have $C^{\star}_t\geq \sum_{i,r}H(P_{\mathcal{X}_i,r})$.
\end{proof}

\bibliographystyle{siamplain}
\bibliography{references}

\begin{thebibliography}{10}

\bibitem{carroll1970analysis}
{\sc J.~D. Carroll and J.-J. Chang}, {\em {Analysis of individual differences
  in multidimensional scaling via an N-way generalization of “Eckart-Young”
  decomposition}}, Psychometrika, 35 (1970), pp.~283--319.

\bibitem{cichocki2015tensor}
{\sc A.~Cichocki, D.~Mandic, L.~De~Lathauwer, G.~Zhou, Q.~Zhao, C.~Caiafa, and
  H.~A. Phan}, {\em Tensor decompositions for signal processing applications:
  {F}rom two-way to multiway component analysis}, {IEEE Signal Processing
  Magazine}, 32 (2015), pp.~145--163.

\bibitem{cover1999elements}
{\sc T.~M. Cover}, {\em Elements of information theory}, John Wiley \& Sons,
  1999.

\bibitem{hanspectrum}
{\sc T.~S. Han}, {\em Information-Spectrum Methods in Information Theory},
  Springer-Verlag Berlin Heidelberg, 2003.

\bibitem{harshman1970foundations}
{\sc R.~A. Harshman et~al.}, {\em {Foundations of the PARAFAC procedure: Models
  and conditions for an ``explanatory" multimodal factor analysis}},  (1970).

\bibitem{hillar2013most}
{\sc C.~J. Hillar and L.-H. Lim}, {\em Most tensor problems are {NP-hard}},
  Journal of the ACM (JACM), 60 (2013), pp.~1--39.

\bibitem{kolda2009tensor}
{\sc T.~G. Kolda and B.~W. Bader}, {\em Tensor decompositions and
  applications}, SIAM review, 51 (2009), pp.~455--500.

\bibitem{komlos1968determinant}
{\sc J.~Koml{\'o}s}, {\em On the determinant of random matrices}, Studia
  Scientiarum Mathematicarum Hungarica, 3 (1968), pp.~387--399.

\bibitem{kruskal1977three}
{\sc J.~B. Kruskal}, {\em Three-way arrays: rank and uniqueness of trilinear
  decompositions, with application to arithmetic complexity and statistics},
  Linear algebra and its applications, 18 (1977), pp.~95--138.

\bibitem{piziak1999full}
{\sc R.~Piziak and P.~Odell}, {\em Full rank factorization of matrices},
  {Mathematics Magazine}, 72 (1999), pp.~193--201.

\bibitem{sidiropoulos2000uniqueness}
{\sc N.~D. Sidiropoulos and R.~Bro}, {\em On the uniqueness of multilinear
  decomposition of {N}-way arrays}, Journal of Chemometrics: A Journal of the
  Chemometrics Society, 14 (2000), pp.~229--239.

\bibitem{stegeman2007kruskal}
{\sc A.~Stegeman and N.~D. Sidiropoulos}, {\em {On Kruskal’s uniqueness
  condition for the Candecomp/Parafac decomposition}}, {Linear Algebra and its
  Applications}, 420 (2007), pp.~540--552.

\bibitem{tao2006random}
{\sc T.~Tao and V.~Vu}, {\em On random $\pm$1 matrices: Singularity and
  determinant}, Random Structures \& Algorithms, 28 (2006), pp.~1--23.

\bibitem{tucker1963}
{\sc L.~R. Tucker}, {\em Implications of factor analysis of three way matrices
  for measurements of change}, In Harris, CW (Editor), Problems in measuring
  change,  (1963), pp.~122--137.

\end{thebibliography}

\end{document}